\documentclass{elsarticle}
\usepackage{amsmath,amsthm,amsfonts,thm-restate,hyperref,cleveref}
%\modulolinenumbers[5]

\newtheorem{definition}{Definition}
\newtheorem{proposition}[definition]{Proposition}
\newtheorem{theorem}[definition]{Theorem}
\newtheorem{lemma}[definition]{Lemma}

%\crefname{algorithm}{Algorithm}{Algorithms}
%\crefname{appendix}{Appendix}{Appendices}
%\crefname{corollary}{Corollary}{Corollaries}
\crefname{equation}{}{}
\crefname{lemma}{Lemma}{Lemmas}
\crefname{proposition}{Proposition}{Propositions}
\crefname{section}{Section}{Sections}
\crefname{table}{Table}{Tables}
\crefname{theorem}{Theorem}{Theorems}

\newcommand{\A}{\mathcal{A}}

\newcommand{\N}{\mathbb{N}}

\bibliographystyle{elsarticle-num}
\journal{Information Processing Letters}

\begin{document}

%\begin{frontmatter}

\title{On Complementing Unambiguous Automata and Graphs With Many Cliques and Cocliques}
\author{Emil Indzhev}
%\address{University of Oxford}
\author{Stefan Kiefer\texorpdfstring{\fnref{mylabel}}{}}
\fntext[mylabel]{Stefan Kiefer is supported by a Royal Society University Research Fellowship.}
\address{University of Oxford, UK}
%\sloppy

%\end{frontmatter}

\begin{abstract}
We show that for any unambiguous finite automaton with $n$ states there exists an unambiguous finite automaton with $\sqrt{n+1} \cdot 2^{n/2}$ states that recognizes the complement language.
This builds and improves upon a similar result by Jir{\'{a}}sek et al.~[Int.\ J.\ Found.\ Comput.\ Sci.\ 29~(5) (2018)].
Our improvement is based on a reduction to and an analysis of a problem from extremal graph theory:
we show that for any graph with $n$ vertices, the product of the number of its cliques with the number of its cocliques (independent sets) is bounded by $(n+1) 2^n$.
\end{abstract}

\maketitle

%\linenumbers

\section{Introduction} \label{sec:intro}

Given two finite automata $\A_1, \A_2$, recognizing languages $L_1, L_2 \subseteq \Sigma^*$, respectively, the \emph{state complexity} of union (or intersection, or complement, etc.) is how many states (in terms of the number of states in $\A_1, \A_2$) may be needed in the worst case for an automaton that recognizes $L_1 \cup L_2$ (or $L_1 \cap L_2$, or $\Sigma^* \setminus L_1$, etc.).
The state complexity depends on the type of automaton considered, such as nondeterministic finite automata (NFAs), deterministic finite automata (DFAs), or unambiguous finite automata (UFAs).
UFAs are NFAs that, for each word $w \in \Sigma^*$, have either one or zero accepting runs.

The state complexity has been well studied for various types of automata and language operations, see, e.g., \cite{JirasekJS18} and the references therein for some known results.
For example, complementing an NFA with $n$~states may require $2^n$ states~\cite{Birget93}, even for automata with binary alphabet~\cite{Jiraskova05}.
However, the state complexity of UFA complementation is not yet fully understood.
It was shown only in~2018 by Raskin~\cite{Raskin18} that it is not polynomial.
\begin{proposition}[\cite{Raskin18}] \label{prop:Raskin18}
For any $n \in \N$ there exists a unary (i.e., $|\Sigma|=1$) UFA $\A_n$ with $n$~states such that any NFA that recognizes $\Sigma^* \setminus L(\A_n)$ has at least $n^{(\log \log \log n)^{\Theta(1)}}$ states.
\end{proposition}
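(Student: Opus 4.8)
The plan is to work entirely over a unary alphabet $\Sigma = \{a\}$, so that a language is identified with its set of accepted lengths $S \subseteq \N$, and to exploit the rigid structure of unary automata. The first step is to invoke Chrobak normal form: every unary NFA with $m$ states is equivalent to one built from an initial path of length $O(m^2)$ feeding into a disjoint collection of simple cycles whose lengths sum to at most $m$. Hence, for all large enough $\ell$, membership of $a^\ell$ is determined purely by the residues of $\ell$ modulo the individual cycle lengths, so the accepted set is eventually a union of arithmetic progressions whose moduli are those cycle lengths. This reduces the assertion ``the complement needs many states'' to ``the complement cannot be represented by few short cycles,'' a statement about covering $\N$ (up to the eventual period) by arithmetic progressions.

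The second step is to exhibit a hard instance $\A_n$. I would base the construction on the first few primes $p_1, \dots, p_k$, letting each prime govern a small cycle so that $\A_n$ has $\Theta\big(\sum_i p_i\big) = O(n)$ states while its combined period is $P = \prod_i p_i = e^{\Theta(\sqrt{n\log n})}$, by the prime number theorem. Unambiguity is obtained by prescribing, via the Chinese Remainder Theorem, residue conditions that make the accepting behaviour of distinct cycles mutually exclusive, so that every word has at most one accepting run. The language $L(\A_n)$ is then chosen so that its complement is forced to interact with all of the primes simultaneously, rather than being describable through a small sub-collection of moduli.

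The heart of the proof, and the step I expect to be the main obstacle, is the lower bound for the complement. Using the normal form, any NFA recognizing $\Sigma^* \setminus L(\A_n)$ yields a family of arithmetic progressions (one per cycle) that must together cover exactly the complementary residue classes modulo $P$; a counting or fooling-set argument should then show that no economical such covering exists, because a small number of small moduli can realise only residue patterns of bounded ``complexity,'' whereas the complement of the CRT-based language has large complexity. Converting this qualitative obstruction into the precise quantitative bound is where the number theory concentrates: estimates on prime sums, on $\operatorname{lcm}$'s of small integers, and on the density of smooth numbers should translate the covering requirement into a state count of at least $n^{(\log\log\log n)^{\Theta(1)}}$. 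The genuinely delicate point is ruling out clever covers — showing that no small set of moduli can conspire, through the overlaps permitted by nondeterminism, to reproduce the complement — and I anticipate that the triple-logarithmic exponent emerges only after a careful, possibly recursive, calibration of the prime-cycle parameters against this covering bound.
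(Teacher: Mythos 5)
This proposition is not proved in the paper at all: it is imported verbatim from Raskin~\cite{Raskin18}, so your sketch has to be measured against Raskin's actual argument. Measured that way, it has a fatal flaw at its core, and the flaw is not a technicality: the Chinese Remainder Theorem works \emph{against} you, not for you. Suppose, as you propose, that the periodic part of $\A_n$ is a disjoint union of cycles of pairwise coprime lengths $p_1,\dots,p_k$ reached from one initial path (Chrobak form), and let $T_i \subseteq \{0,\dots,p_i-1\}$ be the residues accepted through cycle~$i$, so that the lengths accepted via cycle~$i$ form, eventually, the set $\{\ell : \ell \bmod p_i \in T_i\}$. If $T_i \ne \emptyset$ and $T_j \ne \emptyset$ for some $i \ne j$, then by CRT there are infinitely many lengths accepted through cycle~$i$ \emph{and} through cycle~$j$, i.e., words with two accepting runs: the automaton is ambiguous. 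So mutual exclusivity of distinct coprime cycles is impossible unless all but one $T_i$ is empty, in which case $L(\A_n)$ is eventually a union of residue classes modulo a single $p_i \le n$ and its complement has an NFA with $O(n)$ states. If instead you define $L(\A_n)$ by a \emph{conjunction} of residue conditions (so that acceptance ``interacts with all primes simultaneously'', as you put it), then --- regardless of whether such a language even admits a small UFA --- its complement is the disjunction $\bigcup_i \{a^\ell : \ell \bmod p_i \notin S_i\}$, which an NFA recognizes with $O(\sum_i p_i) = O(n)$ states by guessing~$i$. The CRT design space is thus squeezed between ambiguity and NFA-easy complements; no calibration of prime parameters escapes this.

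The second problem is that the step you explicitly defer (``a counting or fooling-set argument should then show\dots'') is the entire content of Raskin's theorem, and its source is combinatorial, not number-theoretic. In Raskin's proof one works with cyclic languages $S \subseteq \mathbb{Z}_N$; UFA size is governed by \emph{disjoint} covers of $S$ by residue classes, and the NFA size of the complement by arbitrary covers of $\mathbb{Z}_N \setminus S$. The hard instance is an encoding into $\mathbb{Z}_N$ of graphs witnessing a superpolynomial gap for the clique vs.\ independent set (Alon--Saks--Seymour-type) problem, due to G\"o\"os, and the exponent $(\log\log\log n)^{\Theta(1)}$ is inherited verbatim from that construction; prime sums, $\operatorname{lcm}$ estimates and smooth numbers never enter, and cannot produce this shape of bound. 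It is worth noting that this is the same clique/coclique dichotomy that the present paper exploits for its \emph{upper} bound (\cref{lem:key-reduction,lem:converse}); that graph-theoretic core, not arithmetic, is where the difficulty of the proposition is concentrated.
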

This super-polynomial blowup (even for unary alphabet and even if the output automaton is allowed to be ambiguous) refuted a conjecture that it may be possible to complement UFAs with a polynomial blowup~\cite{Colcombet15}.
A non-trivial upper bound (for general alphabets and outputting a UFA) %with $n \cdot 2^{0.786 n}$ states
 was shown by Jir{\'{a}}sek et al.~\cite{JirasekJS18}.
\begin{proposition}[\cite{JirasekJS18}] \label{prop:Jirasek}
Let $\A$ be a UFA with $n \ge 7$ states that recognizes a language $L \subseteq \Sigma^*$.
Then there exists a UFA with at most $n \cdot 2^{0.786 n}$ states that recognizes the language $\Sigma^* \setminus L$.
\end{proposition}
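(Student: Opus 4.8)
The plan is to reduce the complementation problem to a counting question about cliques and independent sets in a graph derived from $\A$, and then to settle that counting question by an extremal argument. Write $\A=(Q,\Sigma,\delta,I,F)$ with $|Q|=n$, and for a word $u$ let $R(u)\subseteq Q$ be the set of states reachable from $I$ by reading $u$, and for a word $v$ let $C(v)\subseteq Q$ be the set of states from which some accepting run labelled $v$ exists. The forward subset construction yields a complete DFA for $L$ whose reachable states are exactly the sets in $\mathcal{R}=\{R(u):u\in\Sigma^*\}$; complementing it (swapping accepting and rejecting states) gives a DFA, hence a UFA, for $\Sigma^*\setminus L$ with $|\mathcal{R}|$ states. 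Dually, applying the same idea to the (still unambiguous) reversed automaton $\A^R$ and reversing back---using that the reverse of an unambiguous automaton is unambiguous---gives a UFA for $\Sigma^*\setminus L$ with $|\mathcal{C}|$ states, where $\mathcal{C}=\{C(v):v\in\Sigma^*\}$. Taking whichever is smaller, it suffices to bound $\min(|\mathcal{R}|,|\mathcal{C}|)$.

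The key structural observation is that unambiguity forces $|R(u)\cap C(v)|\le 1$ for all $u,v$: two distinct states in $R(u)\cap C(v)$ would yield two distinct accepting runs on $uv$. I would exploit this by defining a graph $G$ on the vertex set $Q$, placing an edge between $p\neq q$ exactly when $\{p,q\}\subseteq R$ for some $R\in\mathcal{R}$. By construction every $R\in\mathcal{R}$ is a clique of $G$, so $|\mathcal{R}|$ is at most the number of cliques of $G$. Conversely, if two distinct states lay in a common $C\in\mathcal{C}$ and were also joined by an edge, they would lie in a common $R\in\mathcal{R}$, giving $|R\cap C|\ge 2$ and contradicting the intersection bound; hence every $C\in\mathcal{C}$ is an independent set of $G$, so $|\mathcal{C}|$ is at most the number of independent sets of $G$. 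Thus the complement UFA can be built with at most $\min\big(\mathrm{cl}(G),\mathrm{coc}(G)\big)\le\sqrt{\mathrm{cl}(G)\cdot\mathrm{coc}(G)}$ states, where $\mathrm{cl}$ and $\mathrm{coc}$ count cliques and cocliques (each including the empty set).

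It then remains to prove a purely combinatorial bound of the form $\mathrm{cl}(G)\cdot\mathrm{coc}(G)\le c^{\,n}$ for a suitable constant $c$, which is where the bulk of the work lies; any value with $\sqrt{c}\le 2^{0.786}$, after absorbing lower-order factors into the $n$ term, already yields the claimed $n\cdot 2^{0.786n}$ bound. I would attempt this by induction on $n$ via vertex deletion: for any vertex $v$ of degree $d$ one has $\mathrm{cl}(G)=\mathrm{cl}(G-v)+\mathrm{cl}(G[N(v)])$ and, dually, $\mathrm{coc}(G)=\mathrm{coc}(G-v)+\mathrm{coc}(G[\overline{N}(v)])$, where $N(v)$ and $\overline{N}(v)$ are the neighbourhood and non-neighbourhood of $v$. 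Feeding these recurrences into the product and choosing $v$ so as to balance the two sizes should drive the induction.

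The main obstacle is precisely this extremal estimate. The reduction above is essentially bookkeeping, but controlling the product of cliques and cocliques is delicate: cliques alone can already number $2^n$ in a complete graph, so the argument must genuinely trade cliques off against cocliques, and a naive recursion loses constant factors in the exponent. Obtaining a clean, tight constant is the heart of the matter and the step I expect to require the most care---indeed the complete graph, with $2^n$ cliques and $n+1$ cocliques, already shows the product can reach $(n+1)2^n$, suggesting the extremal growth rate of the product is only $2^n$ and hence that the exponent $0.786$ above is far from optimal.
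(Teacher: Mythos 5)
Your reduction is correct, and it is in fact exactly the route this paper takes: the forward/backward subset constructions followed by complementation (\cref{lem:NFA-compl}), the observation that unambiguity forces $|R(u)\cap C(v)|\le 1$, and the graph on the state set whose edges join states co-reachable by a common word, so that forward-determinization states are cliques and backward-determinization states are cocliques (\cref{lem:key-reduction}). Note that the paper never proves \cref{prop:Jirasek} itself---it is quoted from Jir\'asek et al.---but it derives the stronger bound $\sqrt{n+1}\cdot 2^{n/2}$ (\cref{thm:main}) by precisely the strategy you set up, and that stronger bound implies the $n\cdot 2^{0.786 n}$ statement for $n\ge 7$. Your closing remark that the complete graph shows the product can reach $(n+1)2^n$, so that the exponent $0.786$ is far from optimal, is exactly the paper's main point.

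The genuine gap is the extremal estimate, which you leave as a plan, and the plan as stated does not close. Writing $\mathrm{cl}$ and $\mathrm{coc}$ for the numbers of cliques and cocliques, your two recurrences expand the product as
\begin{align*}
\mathrm{cl}(G)\,\mathrm{coc}(G) \ ={}& \ \mathrm{cl}(G-v)\,\mathrm{coc}(G-v) \ + \ \mathrm{cl}(G-v)\,\mathrm{coc}\bigl(G[\overline{N}(v)]\bigr) \\
& \ + \ \mathrm{cl}\bigl(G[N(v)]\bigr)\,\mathrm{coc}(G-v) \ + \ \mathrm{cl}\bigl(G[N(v)]\bigr)\,\mathrm{coc}\bigl(G[\overline{N}(v)]\bigr)\,,
\end{align*}
and the two cross terms multiply counts taken on \emph{different} vertex sets, so an induction hypothesis that bounds the clique--coclique product of a single graph never applies to them; bounding them by monotonicity ($\mathrm{cl}(G[N(v)])\le\mathrm{cl}(G-v)$ and $\mathrm{coc}(G[\overline{N}(v)])\le\mathrm{coc}(G-v)$) loses a factor of $4$ per deleted vertex and yields only the trivial $\min\{\mathrm{cl},\mathrm{coc}\}\le 2^n$. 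The paper avoids any recursion on the product: it fixes the union $S=X\cup Y$ and shows that each $S\subseteq V$ admits at most $2|S|+1$ pairs $(X,Y)$ of a clique and a coclique with $X\cup Y=S$ (\cref{lem:boundRS}). Disjoint such pairs are partitions of $S$, of which there are at most $|S|+1$ (\cref{lem:boundPS}, itself a short induction); for overlapping pairs, any common vertex $v\in X\cap Y$ forces $X\setminus\{v\}$ and $Y\setminus\{v\}$ to be exactly the neighbours and non-neighbours of $v$ in $S$, so there are at most $|S|$ of those. Summing $2|S|+1$ over all $S\subseteq V$ gives $\mathrm{cl}(G)\cdot\mathrm{coc}(G)\le\sum_{i=0}^{n}\binom{n}{i}(2i+1)=(n+1)2^n$ (\cref{lem:boundT}), hence $\min\{\mathrm{cl},\mathrm{coc}\}\le\sqrt{n+1}\cdot 2^{n/2}\le n\cdot 2^{0.786n}$ for $n\ge 7$. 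To complete your proof you need to replace the vertex-deletion recursion with an argument of this kind, or otherwise actually establish a bound $\mathrm{cl}\cdot\mathrm{coc}\le c^n$ with $c\le 2^{1.572}$, which your sketch does not yet do.
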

In this note we build and improve on this result to obtain the following theorem.
\begin{restatable}{theorem}{thmmain} \label{thm:main}
Let $\A$ be a UFA with $n \ge 0$ states that recognizes a language $L \subseteq \Sigma^*$.
Then there exists a UFA with at most $\sqrt{n+1} \cdot 2^{n/2}$ states that recognizes the language $\Sigma^* \setminus L$.
\end{restatable}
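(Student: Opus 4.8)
The plan is to reduce the construction of a small complement UFA to the extremal graph-theoretic bound advertised in the abstract, namely that $c(G)\cdot i(G)\le (n+1)2^n$ for every $n$-vertex graph $G$, where $c(G)$ and $i(G)$ count the cliques and the cocliques of $G$ (both counts including the empty set). Starting from the given UFA $\A=(Q,\Sigma,\delta,I,F)$ with $|Q|=n$, I would attach to it the \emph{conflict graph} $G$ on vertex set $Q$, joining $p,q$ by an edge exactly when some word $x$ admits runs $p\xrightarrow{x}F$ and $q\xrightarrow{x}F$ simultaneously. The single property of $\A$ that I would exploit is unambiguity in the following guise: if $p\neq q$ both lie in the forward-reachable set $\vec R(w)=\{q: I\xrightarrow{w}q\}$, then $p$ and $q$ cannot be adjacent in $G$, since a common continuation $x$ witnessing adjacency would yield two distinct accepting runs on $wx$. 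Thus every forward-reachable set is a coclique of $G$; dually, every co-reachable set $\overleftarrow R(x)=\{q: q\xrightarrow{x}F\}$ is a clique of $G$ by definition.

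These two observations feed two complementation constructions on the \emph{same} graph $G$. First I would run the accessible subset construction on the forward-reachable sets: it is deterministic, accepts $w$ iff $\vec R(w)\cap F=\emptyset$, hence recognises $\Sigma^*\setminus L$, and its states are cocliques of $G$, so it has at most $i(G)$ states. Second, I would perform the reverse subset construction on the co-reachable sets, obtaining a DFA for $(\Sigma^*\setminus L)^R$ whose states are cliques of $G$ and hence number at most $c(G)$; reversing this DFA yields an automaton for $\Sigma^*\setminus L$ that is a UFA, because the reverse of a (partial) deterministic automaton is always unambiguous. Keeping whichever of the two automata is smaller produces a complement UFA with at most $\min\{c(G),i(G)\}\le\sqrt{c(G)\,i(G)}$ states, so the theorem follows once the extremal bound $c(G)\,i(G)\le(n+1)2^n$ is established.

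For the extremal bound I would count pairs $(C,I)$ consisting of a clique $C$ and a coclique $I$ and use the elementary fact $|C\cap I|\le 1$. Splitting by whether $C$ and $I$ are disjoint, the overlapping pairs (sharing one vertex $v$) are counted by $\sum_{v} c(G[N(v)])\cdot i(G[\overline N(v)])\le \sum_v 2^{|N(v)|}2^{|\overline N(v)|}=n\,2^{n-1}$, while the disjoint pairs with $C\cup I=S$ are exactly the ordered partitions of $G[S]$ into a clique and a coclique (split partitions). The crux is the lemma that a graph on $m$ vertices has at most $m+1$ such split partitions, after which the disjoint pairs number at most $\sum_S(|S|+1)=2^n+n\,2^{n-1}$, and the two contributions sum to exactly $(n+1)2^n$.

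I expect the split-partition lemma to be the main obstacle, and I would prove it by induction on $m$: deleting a vertex $v$, every split partition of $G[S]$ restricts to one of $G[S\setminus v]$, and a given partition $(C',I')$ of $G[S\setminus v]$ extends to two split partitions of $G[S]$ only if $v$ is adjacent to all of $C'$ and to none of $I'$, which forces $C'=N(v)\cap S$ and $I'=\overline N(v)\cap S$ and so can happen for at most one partition; hence the count grows by at most one per deleted vertex. The remaining points are routine but worth stating carefully: that reversing a UFA preserves unambiguity (legitimising $\A^R$ and the reversed DFA above), that the accessible subset constructions introduce no states beyond the claimed cliques and cocliques (the empty set is itself a coclique and serves as the sink), and the bookkeeping identity $\sum_k\binom nk(k+1)=2^n+n\,2^{n-1}$.
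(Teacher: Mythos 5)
Your proposal is correct and follows essentially the same route as the paper: the same conflict-graph reduction exploiting unambiguity (your graph is simply the complement of the paper's, obtained by defining edges via common suffixes to $F$ rather than common prefixes from $I$, which swaps the roles of cliques and cocliques), the same split-partition lemma ($\le m+1$ ordered clique/coclique partitions) proved by the same vertex-deletion induction, the same product bound $(n+1)2^n$, and the same conclusion via $\min\{c,i\}\le\sqrt{c\,i}$ applied to the forward and backward subset constructions. The only divergence is bookkeeping inside the product bound: the paper bounds, for each $S$, all pairs $(X,Y)$ with $X\cup Y=S$ by $2|S|+1$ (using a rigidity argument for the at most one overlapping pair per vertex of $S$), whereas you count overlapping pairs globally per overlap vertex $v$ by $c(G[N(v)])\cdot i(G[\overline{N}(v)])\le 2^{n-1}$; both tallies come to exactly $(n+1)2^n$.
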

\Cref{thm:main} is based on a tight analysis of the complementation construction due to Jir{\'{a}}sek et al.~\cite{JirasekJS18}.
Their construction performs two UFA complementation procedures and picks the smaller UFA among the two.
The first procedure is the standard subset construction for determinizing NFAs, followed by swapping accepting and non-accepting states.
The second procedure is a symmetrical ``backward'' variant.
It was shown in~\cite{JirasekJS18} that the smaller of the two resulting UFAs has at most $n \cdot 2^{0.786 n}$ states.
We improve this upper bound to $\sqrt{n+1} \cdot 2^{n/2}$ states.
We also show that this bound (on the complexity of this particular construction) is tight up to a constant factor~$2$; i.e., for every $n$ there is a UFA with $n$~states where the complementation construction from~\cite{JirasekJS18} produces a UFA with at least $\frac12 \sqrt{n+1} \cdot 2^{n/2}$ states.

Our improvement is based on two technical contributions which may be of independent interest:
\begin{enumerate}
\item
We reduce the analysis of the complementation construction described above to a problem from extremal graph theory: loosely speaking, we show in \cref{sec:reduction} that applying the construction to a UFA with $n$~states can yield a large UFA if and only if there exists a graph with $n$~vertices that has both many cliques and many cocliques (independent sets).
\item
We solve this graph problem: we show in \cref{sec:graph} that any graph with $n$~vertices has at most $\sqrt{n+1} \cdot 2^{n/2}$ cliques or at most $\sqrt{n+1} \cdot 2^{n/2}$ cocliques.
This bound is tight up to a factor of~$2$.
\end{enumerate}
These results enable us to prove \cref{thm:main} in \cref{sec:main}.
There we also show that our analysis of the complementation construction from~\cite{JirasekJS18} is almost tight.

\section{Preliminaries} \label{sec:prelims}

\subsection*{Finite Automata}
A \emph{nondeterministic finite automaton (NFA)} is a quintuple $\A = (Q,\Sigma,\delta,I,F)$, where $Q$ is the finite set of states, $\Sigma$ is the finite alphabet, $\delta \subseteq Q \times \Sigma \times Q$ is the transition relation, $I \subseteq Q$ is the set of initial states, and $F \subseteq Q$ is the set of accepting states.
We write $q \xrightarrow{a} r$ to denote that $(q,a,r) \in \delta$.
A finite sequence $q_0 \xrightarrow{a_1} q_1 \xrightarrow{a_2} \cdots \xrightarrow{a_n} q_n$ is called a \emph{run} and can also be summarized as $q_0 \xrightarrow{a_1 a_2 \cdots a_n} q_n$.
The NFA~$\A$ \emph{recognizes} the language $L(\A) := \{w \in \Sigma^* \mid \exists\, q_0 \in I \,.\, \exists\, f \in F \,.\, q_0 \xrightarrow{w} f\}$.
The NFA~$\A$ is a \emph{deterministic finite automaton (DFA)} if $|I| = 1$ and for every $q \in Q$ and $a \in \Sigma$ there is exactly one $q'$ with $q \xrightarrow{a} q'$.
The NFA~$\A$ is an \emph{unambiguous finite automaton (UFA)} if for every word $w = a_1 a_2 \cdots a_n \in \Sigma^*$ there is at most one \emph{accepting} run for~$w$, i.e., a run $q_0 \xrightarrow{a_1}  q_1 \xrightarrow{a_2} \cdots \xrightarrow{a_n} q_n$ with $q_0 \in I$ and $q_n \in F$.
Clearly, every DFA is a UFA.

Let $\A = (Q,\Sigma,\delta,I,F)$ be an NFA.
For $S \subseteq Q$ and $w \in \Sigma^*$ we write
\begin{align*}
\delta(S,w)\ &:=\ \{r \in Q \mid \exists\,q \in S \,.\, q \xrightarrow{w} r\}\,, \quad \text{ and}\\
\delta^{-1}(w,S)\  &:=\ \{r \in Q \mid \exists\,q \in S \,.\, r \xrightarrow{w} q\}\,.
\end{align*}
The \emph{forward determinization} of~$\A$ is the DFA obtained by the standard subset construction, i.e., the DFA $\A_f := (Q_f,\Sigma,\delta_f, \{I\}, F_f)$ with $Q_f := \{ \delta(I,w) \mid w \in \Sigma^* \}$ and $\delta_f := \{(S,a,\delta(S,a)) \mid S \in Q_f,\ a \in \Sigma\}$ and $F_f := \{S \in Q_f \mid S \cap F \ne \emptyset\}$.
Analogously, the \emph{backward determinization} of~$\A$ is the NFA  $\A_b := (Q_b,\Sigma,\delta_b, I_b, \{F\})$ where $Q_b := \{ \delta^{-1}(w,F) \mid w \in \Sigma^* \}$ and $\delta_b := \{(\delta^{-1}(a,S),a,S) \mid S \in Q_b,\ a \in \Sigma\}$ and $I_b := \{S \in Q_b \mid S \cap I \ne \emptyset\}$.
Note that $L(\A) = L(\A_f) = L(\A_b)$.
The NFA $\A_b$ is ``backward-deterministic'', i.e., it has a single accepting state and for every $S \in Q_b$ and $a \in \Sigma$ there is exactly one $S'$ with $S' \xrightarrow{a} S$.
It follows that $\A_b$ is a UFA.
UFAs recognizing $\Sigma^* \setminus L(\A)$ can be obtained from $\A_f$ (resp., $\A_b$) by swapping accepting and non-accepting (resp., initial and non-initial) states.
This implies the following lemma.%
\begin{lemma} \label{lem:NFA-compl}
Let $\A$ be an NFA that recognizes a language $L \subseteq \Sigma^*$.
Suppose that its forward determinization has $k$ states and its backward determinization has $\ell$ states.
Then there is a UFA with at most $\min\{k, \ell\}$ states that recognizes the language $\Sigma^* \setminus L$.
\end{lemma}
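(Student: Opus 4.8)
The plan is to produce two UFAs for the complement language $\Sigma^* \setminus L$ — one built from the forward determinization $\A_f$ and one from the backward determinization $\A_b$ — each with the indicated number of states, and then take whichever is smaller. Since $L(\A) = L(\A_f) = L(\A_b) = L$, it suffices to convert each of $\A_f$ and $\A_b$ into a UFA for $\Sigma^* \setminus L$ without increasing its state count.

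For the forward direction I would invoke the textbook DFA-complementation construction. As $\A_f$ is a DFA with $k$ states, let $\A_f'$ be $\A_f$ with its accepting set $F_f$ replaced by $Q_f \setminus F_f$. The transition function and single initial state are unchanged, so $\A_f'$ is again a DFA (hence a UFA) with $k$ states; and because a DFA has exactly one run per word, $w$ is accepted by $\A_f'$ iff it is rejected by $\A_f$, giving $L(\A_f') = \Sigma^* \setminus L$.

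The backward direction needs slightly more care, and I expect it to be the main (albeit modest) obstacle. Here $\A_b$ has a single accepting state and is backward-deterministic: for every state $S$ and letter $a$ there is a unique $S'$ with $S' \xrightarrow{a} S$. I would first note that this pins down, for each word $w = a_1 \cdots a_n$, a single run reaching the accepting state: reading $w$ backward from that state, each predecessor is forced, so the run is determined by $w$ alone and its only freedom is whether its start state lies in $I_b$. Hence $w \in L(\A_b)$ iff this uniquely determined start state is initial. Now obtain $\A_b'$ from $\A_b$ by replacing $I_b$ with $Q_b \setminus I_b$. Since transitions and the accepting state are untouched, $\A_b'$ is still backward-deterministic — so it has at most one accepting run per word and is a UFA — still has $\ell$ states, and accepts $w$ exactly when the start state of that unique run is non-initial in $\A_b$; thus $L(\A_b') = \Sigma^* \setminus L$.

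Finally, having a UFA with $k$ states and a UFA with $\ell$ states, both recognizing $\Sigma^* \setminus L$, I would conclude by picking the one with $\min\{k,\ell\}$ states. The only genuinely non-routine point is checking that swapping initial and non-initial states in the backward-deterministic $\A_b$ simultaneously preserves unambiguity and produces the complement; the forward case and the final selection are immediate.
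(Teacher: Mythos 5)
Your proposal is correct and follows exactly the paper's own argument: the paper justifies this lemma by the remark that UFAs for $\Sigma^* \setminus L$ are obtained from $\A_f$ by swapping accepting and non-accepting states and from $\A_b$ by swapping initial and non-initial states, then taking the smaller of the two. Your write-up merely fills in the (correct) details of why backward determinism makes the initial-state swap preserve unambiguity and yield the complement, which the paper leaves implicit.
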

Like \cref{prop:Jirasek} from~\cite{JirasekJS18}, \cref{thm:main} is based on \cref{lem:NFA-compl}.

\subsection*{Graphs}
An (undirected, simple, finite) \emph{graph} is a pair $G = (V,E)$, where $V$ is the finite set of vertices and $E \subseteq \{\{v,v'\} \subseteq V \mid v \ne v'\}$ is the set of edges.
A \emph{clique} in~$G$ is a set $X \subseteq V$ of vertices such that whenever $v, v' \in X$ and $v \ne v'$ then $\{v,v'\} \in E$.
A \emph{coclique} (or independent set) in~$G$ is a set $Y \subseteq V$ of vertices such that whenever $v, v' \in Y$ and $v \ne v'$ then $\{v,v'\} \not\in E$.
Note that any subset of a clique, including the empty set, is also a clique, and similarly for cocliques.

\section{Reduction to a Graph Problem} \label{sec:reduction}

The proof of our main result, \cref{thm:main}, rests on two key auxiliary results.
In this section we prove the first key lemma, \cref{lem:key-reduction}.
Loosely speaking, it says that a UFA with a large forward determinization and a large backward determinization defines a graph with many cliques and cocliques.

\begin{lemma} \label{lem:key-reduction}
Let $\A = (Q,\Sigma,\delta,I,F)$ be a UFA.
Suppose that its forward determinization has $k$ states, and its backward determinization has $\ell$ states.
Then there exists a graph $G = (Q, E)$ that has at least $k$~cliques and at least $\ell$~cocliques.
\end{lemma}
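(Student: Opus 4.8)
The plan is to build the graph $G$ so that every state of the forward determinization (a forward-reachable subset of $Q$) becomes a clique, while every state of the backward determinization (a backward-reachable subset) becomes a coclique. Since the $k$ forward-reachable subsets are pairwise distinct, and likewise the $\ell$ backward-reachable subsets, this immediately yields at least $k$ cliques and at least $\ell$ cocliques, which is exactly what is claimed.

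The engine of the argument is a combinatorial consequence of unambiguity that I would establish first: for all words $u, v \in \Sigma^*$,
\[
  |\delta(I,u) \cap \delta^{-1}(v,F)| \le 1 \,.
\]
To see this, fix $u,v$ and consider the word $w = uv$. Any state $q$ in the intersection admits a run from $I$ to $q$ labelled $u$ together with a run from $q$ to $F$ labelled $v$; concatenating them gives an accepting run of $w$ that visits $q$ after exactly $|u|$ letters. Two distinct states $q \ne q'$ in the intersection would thus give two accepting runs of $w$ that differ at position $|u|$, contradicting that $\A$ is a UFA. Hence the intersection holds at most one vertex.

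With this in hand I would define the edge set by declaring $\{p,q\} \in E$ (for $p \ne q$) exactly when some forward-reachable set $\delta(I,u)$ contains both $p$ and $q$. By construction every forward-reachable set is then a clique. For the cocliques, suppose $p \ne q$ both lie in some backward-reachable set $T = \delta^{-1}(v,F)$; were they adjacent, there would be a forward-reachable $S = \delta(I,u)$ with $\{p,q\} \subseteq S$, whence $\{p,q\} \subseteq S \cap T$ and $|S \cap T| \ge 2$, contradicting the bound above. So no two distinct vertices of $T$ are adjacent, i.e.\ $T$ is a coclique.

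The main obstacle is getting the unambiguity-to-intersection bound exactly right and then checking that the graph is well defined, that is, that no pair of vertices is simultaneously forced to be adjacent (by sharing a forward-reachable set) and non-adjacent (by sharing a backward-reachable set). The intersection bound is precisely what rules this out, so once it is in place the remainder is bookkeeping: the forward-reachable subsets are pairwise distinct cliques, giving at least $k$ cliques, and the backward-reachable subsets are pairwise distinct cocliques, giving at least $\ell$ cocliques.
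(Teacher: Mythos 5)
Your proof is correct and takes essentially the same route as the paper: you define the identical edge relation (adjacency iff both states are reachable from $I$ via a common word), the cliques come for free, and your intersection bound $|\delta(I,u) \cap \delta^{-1}(v,F)| \le 1$ is exactly the paper's unambiguity argument (two accepting runs of $uv$ through distinct states at position $|u|$ would contradict unambiguity), just factored out as a standalone claim. No gaps; the repackaging is merely a slightly more modular presentation of the same idea.
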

\begin{proof}
Define
\[
 E \ := \ \{\{q, q'\} \subseteq Q \mid q \ne q',\ \exists\, q_0, q_0' \in I \,.\, \exists\, w \in \Sigma^* \,.\, q_0 \xrightarrow{w} q \land q_0' \xrightarrow{w} q'\}\;.
\]
That is, $\{q, q'\}$ is an edge in~$G$ if the states $q, q'$ in~$\A$ are reachable from initial states via the same word.
It follows from the definition of the forward determinization that every state in the forward determinization of~$\A$ is a clique in~$G$.
So $G$ has at least $k$ cliques.

Let $Y \subseteq Q$ be a state in the backward determinization of~$\A$.
It suffices to show that $Y$ is a coclique in~$G$.
%If $|Y| \le 1$ then $Y$ is a coclique.
%Let $|Y| \ge 2$.
%Let $q, q' \in Y$ with $q \ne q'$.
%It follows from the definition of the backward determinization that there are a word $w_2 \in \Sigma^*$ and states $f, f' \in F$ such that $q \xrightarrow{w_2} f$ and $q' \xrightarrow{w_2} f'$ in~$\A$.
%Towards a contradiction, suppose $\{q,q'\} \in E$.
%Then there are states $q_0, q_0' \in I$ and a word $w_1 \in \Sigma^*$ such that $q_0 \xrightarrow{w_1} q$ and $q_0' \xrightarrow{w_1} q'$.
%Thus, $\A$ accepts the word $w_1 w_2$ via distinct runs
%\[
% q_0 \xrightarrow{w_1} q \xrightarrow{w_2} f \ \text{ and } \ q_0' \xrightarrow{w_1} q' \xrightarrow{w_2} f'\,,
%\]
%contradicting the unambiguousness of~$\A$.
%Hence, $\{q,q'\} \not\in E$.
%Since $q, q' \in Y$ were chosen arbitrarily, it follows that $Y$ is a coclique in~$G$.
If $Y = \emptyset$, then $Y$ is a coclique.
Let $q, q' \in Y$.
It suffices to show that $\{q,q'\}\not\in E$.
If there is no word $w_1 \in \Sigma^*$ with $q,q' \in \delta(I,w_1)$, then $\{q,q'\} \not\in E$.
So we can assume that there are states $q_0, q_0' \in I$ and a word $w_1 \in \Sigma^*$ such that $q_0 \xrightarrow{w_1} q$ and $q_0' \xrightarrow{w_1} q'$.
Since $Y \supseteq \{q,q'\}$ is a state in the backward determinization, there are a word $w_2 \in \Sigma^*$ and states $f, f' \in F$ such that $q \xrightarrow{w_2} f$ and $q' \xrightarrow{w_2} f'$ in~$\A$.
Thus, $\A$ accepts the word $w_1 w_2$ via runs
\[
 q_0 \xrightarrow{w_1} q \xrightarrow{w_2} f \quad \text{ and } \quad q_0' \xrightarrow{w_1} q' \xrightarrow{w_2} f'\,.
\]
Since $\A$ is unambiguous, these runs are equal.
Thus, $q = q'$.
We conclude that $\{q,q'\} \not\in E$.
\end{proof}

\Cref{lem:key-reduction} has the following converse.
\begin{lemma} \label{lem:converse}
Let $(V,E)$ be a graph with $k$~cliques and $\ell$~cocliques.
Then there is a UFA with $|V|$~states whose forward determinization has at least $k$~states and whose backward determinization has at least $\ell$~states.
\end{lemma}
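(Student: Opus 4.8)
The plan is to construct the desired UFA directly from the graph $G = (V,E)$, reversing the reduction of \cref{lem:key-reduction}: a single distinguished state will serve as both the unique initial and the unique accepting state, one family of letters will make every clique appear as a forward-reachable set, and a dual family will make every coclique appear as a backward-reachable set.

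Concretely, I would first dispose of the case $V = \emptyset$, handled by the empty UFA, whose forward and backward determinizations each consist of a single state, matching $k = \ell = 1$. For $V \ne \emptyset$, fix a vertex $q_0 \in V$ and set $Q := V$ and $I := F := \{q_0\}$. For each clique $X$ introduce a letter $a_X$ with transitions $q_0 \xrightarrow{a_X} x$ for every $x \in X$ and no others, and for each coclique $Y$ introduce a letter $b_Y$ with transitions $y \xrightarrow{b_Y} q_0$ for every $y \in Y$ and no others. By construction $\delta(\{q_0\}, a_X) = X$ and $\delta^{-1}(b_Y, \{q_0\}) = Y$, so every clique is a state of the forward determinization and every coclique is a state of the backward determinization. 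Since distinct cliques (resp.\ distinct cocliques) are distinct subsets of $Q$, this shows the forward determinization has at least $k$ states and the backward determinization has at least $\ell$ states.

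It remains to verify that the resulting automaton $\A$ is unambiguous, which I expect to be the only real obstacle. The key structural fact is that each $a_X$-transition emanates only from $q_0$ and each $b_Y$-transition points only to $q_0$. Hence in any accepting run $q_0 = s_0 \xrightarrow{c_1} s_1 \cdots \xrightarrow{c_m} s_m = q_0$ the state is forced to equal $q_0$ immediately before every $a$-letter and immediately after every $b$-letter; the only states not yet pinned down are those $s_i$ lying right after an $a$-letter $a_X$, for which $s_i \in X$. To determine these as well I would inspect the following letter $c_{i+1}$: if the run ends there, or if $c_{i+1}$ is an $a$-letter, then $s_i = q_0$; if instead $c_{i+1} = b_Y$, then $s_i \in X \cap Y$. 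The decisive observation is that any set that is simultaneously a clique and a coclique has at most one element, since two of its vertices would be both adjacent and non-adjacent; thus $|X \cap Y| \le 1$ and $s_i$ is again uniquely determined. Consequently every state of an accepting run is forced by the word, so there is at most one accepting run and $\A$ is a UFA, completing the proof.
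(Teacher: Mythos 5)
Your proposal is correct and takes essentially the same approach as the paper: the construction is identical (a distinguished state serving as sole initial and accepting state, one letter per clique fanning out from it, one letter per coclique fanning back into it), and your unambiguity argument rests on the same key fact---that a clique and a coclique intersect in at most one vertex---which the paper phrases equivalently as: two states lying in both a clique $X$ and a coclique $Y$ must be equal, since otherwise they would be simultaneously adjacent and non-adjacent. The only cosmetic difference is that you argue every state of an accepting run is forced by the word, whereas the paper compares two accepting runs and shows they coincide.
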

\begin{proof}
Let $(V,E)$ be a graph, and let $\mathcal{X}, \mathcal{Y} \subseteq 2^V$ be the sets of its cliques and cocliques, respectively.
If $V = \emptyset$, then $\mathcal{X} = \mathcal{Y} = \{\emptyset\}$, and $\emptyset$ is a state (the only one) of the forward and the backward determinization of a UFA with no states.
So we can assume that there is $v_0 \in V$.
Define the NFA $\A := (V,\Sigma,\delta,\{v_0\},\{v_0\})$ where $\Sigma := \Sigma_1 \cup \Sigma_2$ and $\Sigma_1 := \{1\} \times \mathcal{X}$ and $\Sigma_2 := \{2\} \times \mathcal{Y}$ and $\delta := \delta_1 \cup \delta_2$ and $\delta_1 := \{(v_0, (1,X), v) \mid v \in X \in \mathcal{X}\}$ and $\delta_2 := \{(v, (2,Y), v_0) \mid v \in Y \in \mathcal{Y}\}$.
For each $X \in \mathcal{X}$ we have $\delta(\{v_0\}, (1,X)) = X$, and for each $Y \in \mathcal{Y}$ we have $\delta^{-1}((2,Y),\{v_0\}) = Y$.
Hence, every clique is a state of the forward determinization of~$\A$, and every coclique is a state of the backward determinization of~$\A$.

It remains to show that $\A$ is a UFA.
Let $w_1, w_2 \in \Sigma^*$ and $v, v' \in V$ with $v_0 \xrightarrow{w_1} v \xrightarrow{w_2} v_0$ and $v_0 \xrightarrow{w_1} v' \xrightarrow{w_2} v_0$.
It suffices to show that $v = v'$.
Note that all transitions labelled with~$(1,X)$ have $v_0$ as source, and all transitions labelled with~$(2,Y)$ have $v_0$ as target.
If $w_1$ or~$w_2$ is the empty word or $w_1$ ends with a letter of the form $(2,Y)$ or $w_2$ begins with a letter of the form $(1,X)$, then $v = v_0 = v'$.
So we can assume that $w_1$ ends with, say, $(1,X)$, and $w_2$ begins with, say, $(2,Y)$.
We have
\begin{align*}
v,v' \ &\in \ \delta(\{v_0\}, w_1) \ \subseteq \ \delta(\{v_0\},(1,X)) \ = \ X \qquad\quad \text{ and} \\
v,v' \ &\in \ \delta^{-1}(w_2,\{v_0\}) \ \subseteq \ \delta^{-1}((2,Y),\{v_0\}) \ = \ Y\,.
\end{align*}
Since $X$ is a clique, we have $v = v'$ or $\{v,v'\} \in E$.
Since $Y$ is a coclique, we have $v = v'$ or $\{v,v'\} \not\in E$.
We conclude that $v = v'$.
\end{proof}

\section{Graphs With Many Cliques and Many Cocliques} \label{sec:graph}

This section is about a problem from extremal graph theory.
We show that any graph with $n$~vertices has at most $\sqrt{n+1} \cdot 2^{n/2}$ cliques or at most $\sqrt{n+1} \cdot 2^{n/2}$ cocliques,
and we present a graph that has at least $\frac12 \sqrt{n+1} \cdot 2^{n/2}$ cliques and at least $\frac12 \sqrt{n+1} \cdot 2^{n/2}$ cocliques; see \cref{thm:key-graph} below.

Our proof strategy is to consider pairs of a clique and a coclique.
The following lemma counts the possibilities of partitioning the vertices of a subgraph into a clique and a coclique.
\begin{lemma} \label{lem:boundPS}
Let $(V,E)$ be a graph.
For any $S \subseteq V$, let
 \[
  P_S \ := \ \{X \subseteq S \mid X \text{ is a clique and } S \setminus X \text{ is a coclique}\}\,.
 \]
Then $|P_S| \le |S| + 1$.
\end{lemma}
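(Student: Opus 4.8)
The plan is to prove the bound by induction on $|S|$, using a single graph-theoretic observation to control how the split partitions counted by $P_S$ behave when one vertex is deleted.

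First I would isolate the only place where the edge relation enters. For distinct $X, X' \in P_S$ I would show $|X \setminus X'| \le 1$: if $a, b$ were two distinct elements of $X \setminus X'$, then $a, b \in X$ would force $\{a,b\} \in E$ since $X$ is a clique, while $a, b \in S \setminus X'$ would force $\{a,b\} \notin E$ since $S \setminus X'$ is a coclique, a contradiction. This is the sole structural input; everything afterwards is counting.

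For the induction, the base case $S = \emptyset$ is immediate, as $P_\emptyset = \{\emptyset\}$. For $|S| \ge 1$ I fix a vertex $v \in S$ and split $P_S = A \sqcup B$ according to whether $v \in X$. I would then check that the maps $\phi \colon A \to P_{S \setminus \{v\}}$, $X \mapsto X \setminus \{v\}$, and $\psi \colon B \to P_{S \setminus \{v\}}$, $X \mapsto X$, are well defined and injective: deleting $v$ from (resp. not having $v$ in) a clique leaves a clique, and the new complement $(S \setminus \{v\}) \setminus \phi(X)$ equals $S \setminus X$, hence is a coclique. Injectivity of $\phi$ and $\psi$ is clear. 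Consequently $|P_S| = |\phi(A)| + |\psi(B)| = |\phi(A) \cup \psi(B)| + |\phi(A) \cap \psi(B)| \le |P_{S \setminus \{v\}}| + |\phi(A) \cap \psi(B)|$, and the induction hypothesis gives $|P_{S \setminus \{v\}}| \le |S|$.

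The crux, and the step I expect to be the main obstacle, is to bound the overlap $|\phi(A) \cap \psi(B)| \le 1$. An element of this intersection is a set $X' \subseteq S \setminus \{v\}$ with both $X'$ and $X' \cup \{v\}$ lying in $P_S$. If there were two such sets $X' \ne X''$, I would apply the first-paragraph observation to the pair $X' \cup \{v\}, X'' \in P_S$: since $v \notin X''$ we have $(X' \cup \{v\}) \setminus X'' = (X' \setminus X'') \cup \{v\}$ as a disjoint union, so its size is $|X' \setminus X''| + 1 \le 1$, forcing $X' \subseteq X''$; by symmetry $X'' \subseteq X'$, whence $X' = X''$, a contradiction. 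Combining this with the displayed inequality yields $|P_S| \le |S| + 1$ and closes the induction.
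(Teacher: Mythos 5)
Your proof is correct, and its skeleton matches the paper's: induction on $|S|$, peeling off one vertex $v$, and bounding by $1$ the number of sets $X'$ for which both $X'$ and $X' \cup \{v\}$ lie in the larger family. Where you genuinely differ is in the key step that establishes this uniqueness. The paper identifies the doubled set canonically: if $X'$ and $X' \cup \{v\}$ both belong to $P_{S \cup \{v\}}$, then $X'$ is forced to equal the set of neighbours of $v$ in $S$ (cliqueness of $X' \cup \{v\}$ gives one inclusion, cocliqueness of the complement gives the other), so there can be at most one such set. You instead isolate a standalone near-disjointness property of $P_S$ --- distinct members $X, X'$ satisfy $|X \setminus X'| \le 1$, by the clique/coclique tension --- and rule out two doubled sets by contradiction. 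The two arguments use the same structural tension and are of comparable length, but each buys something. The paper's characterization pinpoints the extremal configuration (the overlap set is the neighbourhood of $v$), and this same style of argument is what the paper redeploys in \cref{lem:boundRS} for overlapping clique/coclique pairs. Your observation, on the other hand, is stronger than what the induction needs: it says that $P_S$, viewed as a family of subsets of $S$, has Hamming diameter at most $2$ (both $|X \setminus X'| \le 1$ and $|X' \setminus X| \le 1$), and Kleitman's diameter theorem then bounds any such family by $|S| + 1$ outright, so your first paragraph already implies the lemma with no induction at all. As written, though, your induction is a correct and self-contained variant of the paper's proof.
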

\begin{proof}
We proceed by induction on~$|S|$.
For $S=\emptyset$ we have $P_\emptyset = \{\emptyset\}$.
Suppose that $|P_S| \le |S| + 1$ holds for some $S \subset V$, and let $v \in V \setminus S$.
For the inductive step, it suffices to show that $|P'| \le |P_S| + 1$ where $P' := P_{S \cup \{v\}}$.
Every $X' \in P'$ can be expressed as $X$ or as $X \cup \{v\}$ for some $X \in P_S$.

Suppose that for some $X \subseteq S$ we have both $X \in P'$ and $X \cup \{v\} \in P'$.
We claim that $X$ must be the set of neighbours of~$v$ in~$S$.
Indeed, since $X \cup \{v\} \in P'$ is a clique, all vertices in~$X$ are neighbours of~$v$ in~$S$.
Conversely, since $X \in P'$, the set $(S \cup \{v\}) \setminus X$ is a coclique; thus, every neighbour of~$v$ in~$S$ is in~$X$.

Hence, there is at most one $X \in P_S$ with both $X \in P'$ and $X \cup \{v\} \in P'$.
It follows that $|P'| \le |P_S| + 1$.
\end{proof}

The following lemma is similar, but considers pairs of a clique and a coclique that may overlap.
\begin{lemma} \label{lem:boundRS}
Let $(V,E)$ be a graph.
For any $S \subseteq V$, let
 \[
  R_S \ := \ \{(X,Y) \in 2^S \times 2^S \mid X \cup Y = S \text{ and } X \text{ is a clique and } Y \text{ is a coclique}\}\,.
 \]
Then $|R_S| \le 2 |S| + 1$.
\end{lemma}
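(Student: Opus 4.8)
The plan is to count $R_S$ by reducing to the partition count of \cref{lem:boundPS}, splitting $R_S$ according to the size of the overlap $X \cap Y$. The starting observation is that this overlap contains at most one vertex: if $v, v'$ were distinct elements of $X \cap Y$, then $\{v,v'\} \in E$ because $X$ is a clique, while $\{v,v'\} \notin E$ because $Y$ is a coclique, a contradiction. Hence every $(X,Y) \in R_S$ falls into exactly one of two classes: the \emph{disjoint} pairs with $X \cap Y = \emptyset$, and the \emph{overlapping} pairs with $|X \cap Y| = 1$.

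First I would bound the disjoint pairs. If $X \cap Y = \emptyset$ and $X \cup Y = S$, then $Y = S \setminus X$, so the pair is completely determined by $X$, and the conditions that $X$ is a clique and $S \setminus X$ is a coclique are exactly the membership condition for $X \in P_S$. Thus the disjoint pairs are in bijection with $P_S$, and \cref{lem:boundPS} gives at most $|S| + 1$ of them.

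Next I would bound the overlapping pairs, and this is where the key idea lies. I claim each such pair is determined by the single vertex $v$ with $X \cap Y = \{v\}$. Indeed, since $X$ is a clique containing $v$, every other vertex of $X$ is a neighbour of $v$; since $Y$ is a coclique containing $v$, no other vertex of $Y$ is a neighbour of $v$. Writing $N$ for the set of neighbours of $v$ in $S$, and using that $X \setminus \{v\}$ and $Y \setminus \{v\}$ partition $S \setminus \{v\}$, the two containments $X \setminus \{v\} \subseteq N$ and $(Y \setminus \{v\}) \cap N = \emptyset$ force $X \setminus \{v\} = N$ and $Y \setminus \{v\} = (S \setminus \{v\}) \setminus N$. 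Hence $v$ determines $(X,Y)$, so there are at most $|S|$ overlapping pairs.

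Combining the two bounds yields $|R_S| \le (|S| + 1) + |S| = 2|S| + 1$, as required. The crux of the argument is the rigidity of the overlapping pairs: a naive attempt to encode each one through an element of some $P_{S \setminus \{v\}}$ overcounts quadratically, and it is precisely the observation that the common vertex pins down both $X$ and $Y$ that keeps the count linear. Alternatively, one could mirror the proof of \cref{lem:boundPS} and induct on $|S|$, establishing $|R_{S \cup \{v\}}| \le |R_S| + 2$ by checking that at most one pair of $R_S$ admits the ``two-sided'' extension in which $v$ is added to both coordinates.
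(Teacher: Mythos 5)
Your proof is correct and follows essentially the same route as the paper's: split $R_S$ into the disjoint pairs, which biject with $P_S$ and are bounded by \cref{lem:boundPS}, and the overlapping pairs, each of which is pinned down by its common vertex $v$ (forcing $X \setminus \{v\}$ to be exactly the neighbourhood of $v$ in $S$), giving $(|S|+1) + |S| = 2|S|+1$. Your extra observation that $|X \cap Y| \le 1$ is a harmless refinement the paper does not need, since its injection from overlapping pairs into $S$ already works when the intersection could be larger.
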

\begin{proof}
Let $S \subseteq V$.
Every $(X,Y) \in R_S$ with $X \cap Y = \emptyset$ can be uniquely written as $(X, S \setminus X)$ with $X \in P_S$, where $P_S$ is defined as in \cref{lem:boundPS}.
By \cref{lem:boundPS}, there are at most $|S| + 1$ pairs $(X,Y) \in R_S$ with $X \cap Y = \emptyset$.

Let $v \in S$.
Suppose there is $(X_v,Y_v) \in R_S$ with $v \in X_v \cap Y_v$.
Since $X_v$ is a clique, every vertex in $X_v \setminus \{v\}$ is a neighbour of~$v$ in~$S$.
Since $Y_v$ is a coclique, the neighbours of~$v$ in~$S$ are not in~$Y_v$, but since $X_v \cup Y_v = S$ they must be in~$X_v$.
Thus, $X_v \setminus \{v\}$ is the set of neighbours of~$v$ in~$S$.
By a symmetric argument, $Y_v \setminus \{v\}$ is the set of non-neighbours of~$v$ in~$S$.
It follows that $(X_v,Y_v)$ is the only $(X,Y) \in R_S$ with $v \in X \cap Y$.

We conclude that there are at most $|S|$ (one for each $v \in S$) pairs $(X,Y) \in R_S$ with $X \cap Y \ne \emptyset$.
Hence, $|R_S| \le (|S| + 1) + |S| = 2 |S| + 1$.
\end{proof}
\Cref{lem:boundRS} implies a bound on the product of the number of cliques with the number of cocliques.
\begin{lemma} \label{lem:boundT}
Let $(V,E)$ be a graph with $|V| = n$.
Then
\[
 |\{X \subseteq V \mid X \text{ is a clique}\}| \cdot |\{Y \subseteq V \mid Y \text{ is a coclique}\}| \ \le \ (n+1) 2^n\,.
\]
\end{lemma}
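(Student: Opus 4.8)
The plan is to interpret the left-hand side as the cardinality of the Cartesian product $\mathcal{X} \times \mathcal{Y}$, where $\mathcal{X}$ denotes the set of cliques and $\mathcal{Y}$ the set of cocliques of $(V,E)$, and then to count this set by partitioning it according to the union of each pair. Concretely, I would map each pair $(X,Y) \in \mathcal{X} \times \mathcal{Y}$ to the set $S := X \cup Y \subseteq V$, which organizes $\mathcal{X} \times \mathcal{Y}$ into fibers indexed by the subsets $S \subseteq V$.

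The key observation is that the fiber over a fixed $S$ --- the set of pairs $(X,Y)$ of a clique $X$ and a coclique $Y$ with $X \cup Y = S$ --- is exactly the set $R_S$ from \cref{lem:boundRS}. Here one uses that $X \cup Y = S$ forces $X, Y \subseteq S$, so a clique/coclique pair of $(V,E)$ with union $S$ is the same as a clique/coclique pair inside $S$ with union $S$. Therefore
\[
 |\mathcal{X}| \cdot |\mathcal{Y}| \ = \ |\mathcal{X} \times \mathcal{Y}| \ = \ \sum_{S \subseteq V} |R_S| \ \le \ \sum_{S \subseteq V} (2|S| + 1)\,,
\]
where the last step applies the bound of \cref{lem:boundRS} to each fiber.

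It then remains to evaluate the right-hand sum, which is a routine binomial computation: grouping subsets by their size $j$ gives $\sum_{j=0}^{n} \binom{n}{j}(2j+1) = 2 \sum_j j\binom{n}{j} + \sum_j \binom{n}{j} = 2 \cdot n 2^{n-1} + 2^n = (n+1)2^n$, using the standard identities $\sum_j \binom{n}{j} = 2^n$ and $\sum_j j \binom{n}{j} = n 2^{n-1}$. This yields the claimed bound.

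I expect essentially no obstacle here once \cref{lem:boundRS} is available: the only genuine idea is to organize the count of clique/coclique pairs by their union, so that each fiber is controlled by the per-$S$ bound $|R_S| \le 2|S| + 1$, after which the summation is elementary. The one point worth checking carefully is the identification of the fiber with $R_S$, namely that letting $X$ and $Y$ range over cliques and cocliques of the whole graph (rather than only over those of the induced subgraph on $S$) does not change the count; this holds because being a clique and being a coclique are hereditary properties.
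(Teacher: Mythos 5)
Your proposal is correct and follows essentially the same route as the paper: it identifies the product with the set of clique/coclique pairs, partitions these pairs according to the union $S = X \cup Y$ so that each block is exactly $R_S$, applies \cref{lem:boundRS}, and finishes with the binomial sum $\sum_{j} \binom{n}{j}(2j+1) = (n+1)2^n$. The only cosmetic difference is in evaluating that sum (you invoke $\sum_j j\binom{n}{j} = n2^{n-1}$, while the paper uses the symmetry $\binom{n}{i} = \binom{n}{n-i}$), which is immaterial.
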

\begin{proof}
The product on the left-hand side is equal to~$|R|$, where
\[
 R \ := \ \{(X,Y) \in 2^V \times 2^V \mid  X \text{ is a clique and } Y \text{ is a coclique}\}\,.
\]
We have $R = \bigcup_{S \subseteq V} R_S$, so by \cref{lem:boundRS} we have
\begin{align*}
 |R| \ &\le\ \sum_{i=0}^{n} \binom{n}{i} (2 i + 1) \\
       &=\ \sum_{i=0}^n \left( \binom{n}{i} i + \binom{n}{i} i + \binom{n}{i} \right) \\
       &=\ \sum_{i=0}^n \left( \binom{n}{i} i + \binom{n}{i} (n-i) + \binom{n}{i} \right) \\
       &=\ \sum_{i=0}^n \left( \binom{n}{i} n + \binom{n}{i} \right) \ = \ (n+1) 2^n \,. \tag*{\qedhere}
\end{align*}
\end{proof}
The bound in \cref{lem:boundT} is tight, as a \emph{complete} graph with $n$ vertices (which has all $\binom{n}{2}$ possible edges) has $2^n$ cliques and $n+1$ cocliques.

Now we can prove the following theorem, which serves as the second of our two key auxiliary results, but may be of independent interest.
\begin{theorem} \label{thm:key-graph}
Any graph with $n$ vertices has at most $\sqrt{n+1} \cdot 2^{n/2}$ cliques or at most $\sqrt{n+1} \cdot 2^{n/2}$ cocliques.
Moreover, for any $n \ge 0$ there is a graph with $n$~vertices that has at least $\frac12 \sqrt{n+1} \cdot 2^{n/2}$ cliques and at least $\frac12 \sqrt{n+1} \cdot 2^{n/2}$ cocliques.
\end{theorem}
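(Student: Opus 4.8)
The plan is to prove the two assertions separately. For the upper bound I would simply combine the product bound from \cref{lem:boundT} with the observation that $\bigl(\sqrt{n+1}\cdot 2^{n/2}\bigr)^2 = (n+1)2^n$. Concretely, writing $c$ and $d$ for the numbers of cliques and cocliques of a given graph, if both $c$ and $d$ were strictly larger than $\sqrt{n+1}\cdot 2^{n/2}$, then the product $c\cdot d$ would exceed $(n+1)2^n$, contradicting \cref{lem:boundT}. Hence at least one of the two counts is at most $\sqrt{n+1}\cdot 2^{n/2}$.

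For the construction I would use the split graph $G_a := K_a \sqcup \overline{K_{n-a}}$, the disjoint union of a complete graph on $a$ vertices and an independent set on $n-a$ vertices, and optimise over $a$. First I would record its two counts: since every nonempty clique of a disjoint union lies in a single component, $G_a$ has $2^a + (n-a)$ cliques, whereas independent sets combine freely across components, so $G_a$ has $(a+1)\cdot 2^{n-a}$ cocliques. The number of cliques is at least $g_1(a) := 2^a$ and the number of cocliques equals $g_2(a) := (a+1)2^{n-a}$. On the interval $[n/2,n]$ the function $g_1$ is increasing and $g_2$ is decreasing, so they cross at a unique real $x^* \in [n/2,n]$, where both equal $V := 2^{x^*}$; squaring the crossover identity gives $V = \sqrt{(x^*+1)2^n}$. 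Because $x^* \ge n/2$ forces $x^*+1 \ge (n+1)/2$, I obtain the key estimate $V \ge \tfrac{1}{\sqrt2}\sqrt{n+1}\cdot 2^{n/2}$.

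I would then take $a$ to be an integer within distance $\tfrac12$ of $x^*$ (valid since $x^*\in[n/2,n]$) and show that rounding costs at most a factor $2^{1/2}$ in each count. If $a \ge x^*$, then $2^a \ge 2^{x^*} = V$, and $(a+1)2^{n-a} \ge (x^*+1)2^{n-x^*}2^{x^*-a} = V\cdot 2^{x^*-a} \ge V/\sqrt2$ using $a \le x^*+\tfrac12$; if $a \le x^*$ the roles swap, with $2^a = V\cdot 2^{a-x^*} \ge V/\sqrt2$ and $(a+1)2^{n-a} = g_2(a) \ge g_2(x^*) = V$ because $g_2$ is decreasing on the relevant range. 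In either case both the clique count and the coclique count of $G_a$ are at least $V/\sqrt2 \ge \tfrac12\sqrt{n+1}\cdot 2^{n/2}$, which is exactly the claimed bound; the degenerate case $n=0$ (and, if desired, the smallest $n$) can be checked directly.

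The main obstacle I anticipate is that the bound is genuinely tight against integrality, so crude estimates will not work. The real optimum of $\min(g_1,g_2)$ is about $\tfrac{1}{\sqrt2}\sqrt n\,2^{n/2} \approx 0.707\,\sqrt n\,2^{n/2}$, while the target is $0.5\,\sqrt n\,2^{n/2}$; the only available slack is precisely the factor $\sqrt2$ that a worst-case rounding of $x^*$ to an integer can consume. The delicate step is therefore to argue that rounding $x^*$ to the nearest integer loses at most $2^{1/2}$ \emph{simultaneously} in both counts, which is exactly what the combination of the estimate $V \ge \tfrac{1}{\sqrt2}\sqrt{n+1}\cdot 2^{n/2}$ with one further $2^{-1/2}$ rounding factor is designed to secure.
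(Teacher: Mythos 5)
Your proof is correct and takes essentially the same route as the paper: the upper bound follows from the product bound of \cref{lem:boundT} via $\min\{c,d\}\le\sqrt{c\cdot d}$, and the lower bound uses the same split graph (a clique on $a$ vertices disjointly joined with an independent set on $n-a$ vertices) with $a$ rounded to a near-optimal value. The only cosmetic difference is that the paper takes $k$ to be the integer nearest to the explicit expression $\frac{n}{2} + \frac12 \log_2 \frac{n+1}{2}$ and verifies both counts directly, whereas you characterize the optimum implicitly as the crossover point of $2^x$ and $(x+1)2^{n-x}$ and control the rounding loss via monotonicity.
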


\begin{proof}
Since $\min\{k,\ell\} \le \sqrt{k\cdot \ell}$ holds for all $k,\ell$, the upper bound follows from \cref{lem:boundT}.
Towards the lower bound, if $n = 0$, the lower bound holds, as the empty set is a clique and a coclique.
Let $n \ge 1$.
Define a graph $G = (V_1 \cup V_2, E)$ with $|V_1| = k$ and $|V_2| = n-k$ for some $k \in \{0, 1, \ldots, n\}$ and $E = \{\{v,v'\} \subseteq V_1 \mid v \ne v'\}$.
Graph~$G$ has at least $2^k$ cliques and at least (in fact, exactly) $(k+1) 2^{n-k}$ cocliques.
Choose $k$ as an integer nearest to $\frac{n}{2} + \frac12 \log_2 \frac{n+1}{2}$.
Then we have
\begin{align*}
 2^k \ &\ge\ 2^{\frac{n}{2} - \frac12 + \frac12 \log_2 \frac{n+1}{2}} \ = \ \frac12 \sqrt{n+1} \cdot 2^{\frac{n}{2}} \qquad\qquad\quad \text{ and} \\
 (k+1) 2^{n-k} \ &\ge\ \left( \frac{n}{2} + \frac12 + \frac12 \log_2 \frac{n+1}{2} \right) 2^{\frac{n}{2} - \frac12 - \frac12 \log_2 \frac{n+1}{2}} \\
 &\ge\ \frac{n+1}{2} \cdot 2^{\frac{n}{2} - \frac12 - \frac12 \log_2 \frac{n+1}{2}} \\
 &=\ \frac12 \sqrt{n+1} \cdot 2^{\frac{n}{2}}\,. \tag*{\qedhere}
\end{align*}
\end{proof}

\section{Proof of the Main Result and Conclusions} \label{sec:main}

In the previous two sections we proved our key auxiliary results, \cref{lem:key-reduction,thm:key-graph}, respectively.
We use them to show the main theorem, \cref{thm:main}, which is restated here for convenience.

\thmmain*

\begin{proof}
Suppose that the forward determinization of~$\A$ has $k$ states, and the backward determinization of~$\A$ has $\ell$ states.
By \cref{lem:key-reduction}, there is a graph with $n$ vertices, at least $k$ cliques and at least $\ell$ cocliques.
By \cref{thm:key-graph}, it follows that $k \le \sqrt{n+1} \cdot 2^{n/2}$ or $\ell \le \sqrt{n+1} \cdot 2^{n/2}$.
Hence, by \cref{lem:NFA-compl}, there is a UFA with at most $\sqrt{n+1} \cdot 2^{n/2}$ states that recognizes the language $\Sigma^* \setminus L$.
\end{proof}

By the argument justifying \cref{lem:NFA-compl}, the UFA for $\Sigma^* \setminus L$ in \cref{thm:main} is obtained either by swapping accepting and non-accepting states in the forward determinization or by swapping initial and non-initial states in the backward determinization.
So we have actually proved that for any UFA with $n$~states, either its forward determinization or its backward determinization has at most $\sqrt{n+1} \cdot 2^{n/2}$ states.
The next proposition shows that this bound is optimal up to a factor of~$2$.
\begin{proposition} \label{prop:tight-Jirasek}
For every $n \ge 0$ there is a UFA with $n$~states such that both its forward and its backward determinization have at least $\frac12 \sqrt{n+1} \cdot 2^{n/2}$ states.
\end{proposition}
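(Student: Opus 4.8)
The plan is to obtain the desired UFA by combining the two converse-direction results that have already been established: the lower-bound half of \cref{thm:key-graph}, which produces a single graph that is simultaneously rich in cliques \emph{and} in cocliques, and \cref{lem:converse}, which turns such a graph into a UFA whose forward and backward determinizations are correspondingly large. The key observation is that these two results dovetail exactly: \cref{lem:converse} transfers the clique count into a lower bound on the number of states of the forward determinization and the coclique count into a lower bound on the number of states of the backward determinization, while \cref{thm:key-graph} guarantees that \emph{both} counts can be made at least $\frac12 \sqrt{n+1} \cdot 2^{n/2}$ for the \emph{same} graph on $n$ vertices.

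Concretely, first I would invoke the lower-bound part of \cref{thm:key-graph} to fix, for the given $n \ge 1$, a graph $G = (V,E)$ with $|V| = n$ having at least $\frac12 \sqrt{n+1} \cdot 2^{n/2}$ cliques and at least $\frac12 \sqrt{n+1} \cdot 2^{n/2}$ cocliques; this is the explicit graph consisting of a clique on $k$ vertices together with $n-k$ isolated vertices, with $k$ chosen as in that proof. Second, I would apply \cref{lem:converse} to $G$, yielding a UFA $\A$ with exactly $|V| = n$ states whose forward determinization has at least as many states as $G$ has cliques, and whose backward determinization has at least as many states as $G$ has cocliques. Chaining the two inequalities, both determinizations of $\A$ then have at least $\frac12 \sqrt{n+1} \cdot 2^{n/2}$ states, which is precisely the claim.

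The only remaining point is the boundary case $n = 0$, which I would dispatch separately: the UFA with no states has forward and backward determinizations each consisting of the single state $\emptyset$, and $\frac12 \sqrt{0+1} \cdot 2^{0} = \frac12 \le 1$, so the bound holds there as well. I do not anticipate a genuine obstacle here, since the substantive work has already been carried out in \cref{thm:key-graph} and \cref{lem:converse}; the one thing worth verifying is simply that the clique/coclique bounds and the forward/backward state bounds are matched up in the correct orientation, so that a single graph furnishes simultaneous lower bounds for \emph{both} determinizations rather than for each in isolation.
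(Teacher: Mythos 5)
Your proposal is correct and takes essentially the same route as the paper's own proof: invoke the lower-bound half of \cref{thm:key-graph} to obtain a single graph on $n$ vertices with at least $\frac12 \sqrt{n+1} \cdot 2^{n/2}$ cliques and as many cocliques, then feed it into \cref{lem:converse}. Your separate treatment of $n=0$ is harmless but unnecessary, since both \cref{thm:key-graph} and \cref{lem:converse} already cover that case.
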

\begin{proof}
Let $n \ge 0$.
By \cref{thm:key-graph}, there is a graph with $n$~vertices that has at least $\frac12 \sqrt{n+1} \cdot 2^{n/2}$ cliques and at least $\frac12 \sqrt{n+1} \cdot 2^{n/2}$ cocliques.
Now the statement follows from \cref{lem:converse}.
\end{proof}
We remark that the UFA from \cref{prop:tight-Jirasek}, based on the construction from the proof of \cref{lem:converse}, has an alphabet whose size is the number of cliques plus the number of cocliques, i.e., at least $\sqrt{n+1} \cdot 2^{n/2}$.
%The authors did not attempt to minimize the alphabet size.

Overall, we conclude that the UFA complementation construction due to Jir{\'{a}}sek et al.~\cite{JirasekJS18} has a worst-case state complexity of $\Theta(\sqrt{n+1} \cdot 2^{n/2})$.
This is currently the best upper bound on the state complexity of complementing UFAs.
Obtaining it has been the main contribution of this article.
It remains possible that other constructions lead to smaller, even sub-exponential, UFAs for the complement language.

%\subsection*{Union}
%
%\Cref{thm:main} also leads to an improved bound on the state complexity of union:
%
%\begin{theorem} \label{thm:union}
%Let $\A_1, \A_2$ be UFAs with $n_1, n_2 \ge 1$ states recognizing languages $L_1, L_2 \subseteq \Sigma^*$, respectively.
%Then there exists a UFA with at most $n_1 + n_2(n_1+1) e^{n_1/e} \le n_1 + n_2 \cdot 2^{0.531 n_1 + \log_2(n_1+1)}$ states that recognizes $L_1 \cup L_2$.
%\end{theorem}
%\begin{proof}
%Follow the approach from \cite[Section~4]{JirasekJS18}, which is based on writing $L_1 \cup L_2$ as disjoint union $L_1 \cup (L_2 \cap (\Sigma^* \setminus L_1))$, but for complementation use the bound from \cref{thm:main} instead of the corresponding bound from \cite{JirasekJS18}.
%\end{proof}

%\biboptions{sort&compress}

\bibliography{references}

\begin{thebibliography}{1}
\expandafter\ifx\csname url\endcsname\relax
  \def\url#1{\texttt{#1}}\fi
\expandafter\ifx\csname urlprefix\endcsname\relax\def\urlprefix{URL }\fi
\expandafter\ifx\csname href\endcsname\relax
  \def\href#1#2{#2} \def\path#1{#1}\fi

\bibitem{JirasekJS18}
J.~{Jir{\'{a}}sek Jr.}, G.~Jir{\'{a}}skov{\'{a}}, J.~\v{S}ebej,
  \href{https://doi.org/10.1142/S012905411842008X}{Operations on unambiguous
  finite automata}, International Journal of Foundations of Computer Science
  29~(5) (2018) 861--876.
\newblock \href {https://doi.org/10.1142/S012905411842008X}
  {\path{doi:10.1142/S012905411842008X}}.
\newline\urlprefix\url{https://doi.org/10.1142/S012905411842008X}

\bibitem{Birget93}
J.-C. Birget,
  \href{https://www.sciencedirect.com/science/article/pii/030439759390160U}{Partial
  orders on words, minimal elements of regular languages, and state
  complexity}, Theoretical Computer Science 119~(2) (1993) 267--291.
\newblock \href {https://doi.org/https://doi.org/10.1016/0304-3975(93)90160-U}
  {\path{doi:https://doi.org/10.1016/0304-3975(93)90160-U}}.
\newline\urlprefix\url{https://www.sciencedirect.com/science/article/pii/030439759390160U}

\bibitem{Jiraskova05}
G.~Jir{\'{a}}skov{\'{a}},
  \href{https://doi.org/10.1016/j.tcs.2004.04.011}{State complexity of some
  operations on binary regular languages}, Theoretical Computer Science 330~(2)
  (2005) 287--298.
\newblock \href {https://doi.org/10.1016/j.tcs.2004.04.011}
  {\path{doi:10.1016/j.tcs.2004.04.011}}.
\newline\urlprefix\url{https://doi.org/10.1016/j.tcs.2004.04.011}

\bibitem{Raskin18}
M.~Raskin, A superpolynomial lower bound for the size of non-deterministic
  complement of an unambiguous automaton, in: 45th International Colloquium on
  Automata, Languages, and Programming (ICALP 2018), Vol. 107 of Leibniz
  International Proceedings in Informatics (LIPIcs), 2018, pp. 138:1--138:11.

\bibitem{Colcombet15}
T.~Colcombet, Unambiguity in automata theory, in: 17th International Workshop
  on Descriptional Complexity of Formal Systems (DCFS), Vol. 9118 of Lecture
  Notes in Computer Science, Springer, 2015, pp. 3--18.

\end{thebibliography}

\end{document}